\pgfplotsset{compat=newest}
\theoremstyle{plain}
\newtheorem{thm}{Theorem} 
\theoremstyle{definition}
\newtheorem{defn}{Definition} 
\newcommand*{\algrule}[1][\algorithmicindent]{%
  \makebox[#1][l]{%
    \hspace*{.2em}
    \vrule height .75\baselineskip depth .25\baselineskip
  }
}
\def\ALG@printindent{%
    \ifnum \theALG@nested>0
    \ifx\ALG@text\ALG@x@notext
    \else
    \unskip
    \ALG@printindent@tempcnta=1
    \loop
    \algrule[\csname ALG@ind@\the\ALG@printindent@tempcnta\endcsname]%
    \advance \ALG@printindent@tempcnta 1
    \ifnum \ALG@printindent@tempcnta<\numexpr\theALG@nested+1\relax
    \repeat
    \fi
    \fi
}
\patchcmd{\ALG@doentity}{\noindent\hskip\ALG@tlm}{\ALG@printindent}{}{\errmessage{failed to patch}}
\patchcmd{\ALG@doentity}{\item[]\nointerlineskip}{}{}{} 
\newcommand{\RNum}[1]{\uppercase\expandafter{\romannumeral #1\relax}} 
\let\emptyset\varnothing
\patchcmd{\paragraph}{\itshape}{\bfseries\boldmath}{}{}
    \newenvironment{customlegend}[1][]{%
        \begingroup
        \csname pgfplots@init@cleared@structures\endcsname
        \pgfplotsset{#1}%
    }{%
        \csname pgfplots@createlegend\endcsname
        \endgroup
    }%
    \def\addlegendimage{\csname pgfplots@addlegendimage\endcsname}
\setlist[enumerate]{label={\upshape (\roman*)}}
\newcounter{centredequ} 
\newenvironment{centredequ}{\refstepcounter{equation}\hfill\begin{math}}{\end{math}\hfill$(\theequation)$\par\noindent}
\crefname{centredequ}{eq.}{eqs. }
\Crefname{centredequ}{Eq.}{Eqs. }
\begin{document}
\title{A Projected Upper Bound for Mining High Utility Patterns from Interval-Based Event Sequences}
\author{S. Mohammad Mirbagheri\textsuperscript{(\Letter)}}
%
%
%
\institute{Department of Computer Science, University of Regina, Regina, Canada\\
\email {Mo.Mirbagheri}@uregina.ca}

\maketitle              
\begin{abstract}

High utility pattern mining is an interesting yet challenging problem. The intrinsic computational cost of the problem will impose further challenges if efficiency in addition to the efficacy of a solution is sought.
Recently, this problem was studied on interval-based event sequences with a constraint on the length and size of the patterns. However, the proposed solution lacks adequate efficiency. To address this issue, we propose a projected upper bound on the utility of the patterns discovered from sequences of interval-based events.  
To show its effectiveness, the upper bound is utilized by a pruning strategy employed by the HUIPMiner algorithm. Experimental results show that the new upper bound improves HUIPMiner performance in terms of both execution time and memory usage. 

\keywords{High utility, pattern mining, sequential mining, temporal pattern, event sequence}
\end{abstract}
\section{Introduction}
\textit{Frequent Pattern Mining} (FPM) \cite{FPMSurvey} has been well-studied over the past two decades. The goal of FPM is to discover patterns such that the frequency of their appearances in a dataset is higher than a user-specified threshold. Since the measure of the interestingness of the patterns is frequency, FPM is incapable of addressing problems where the frequency of occurrences of patterns is not of interest. As a result, High Utility Pattern Mining (HUPM) was proposed for problems where patterns with high utilities, e.g., profits generated by patterns, are of interest, and thus they are measured based on their utilities rather than the frequency of occurrences. In particular, discovering patterns with utilities no less than a minimum utility threshold set by a user is the focus of HUPM. 
 
Depending on the domain of the data, and similar to FPM, various types of HUPM have been introduced, e.g., High Utility Itemest Mining \cite{UtilityItemsetSurvey}, High Utility Episode Mining \cite{wu2013}, and High Utility Sequential Pattern Mining \cite{UtilitySequenceSurvey}. 
Interval-based event sequences (e-sequences) are the sequences in which multiple events can occur coincidentally and persist over varying periods of time. 
E-sequences are present in many real-world applications from different domains, such as medicine \cite{patelHepatit,2019medical}, sensor technology \cite{morchenSensor}, sign language \cite{signLanguage}, and activity recognition \cite{motion2016}.
As a running example, four sequences of interval-based events which form a dataset are presented in \Cref{DB}. 
As shown, each e-sequences contains event intervals with various labels, beginning and finishing times. These e-sequences have been visualized in the furthest right column of the table. 

Mirbagheri and Hamilton \cite{Mirbagheri-DKE,Coincidence} have recently studied HUPM on e-sequences. They introduced a framework to incorporate the concept of utility into these sequences and proposed an algorithm named HUIPMiner to discover the high utility patterns. They defined an upper bound on the utility of e-sequences w.r.t. a maximum length $k$, namely the L-\textit{sequence-weighted utilization} ($\mathrm{LWU}_k$). The upper bound is employed in a pruning strategy in the algorithm to reduce the search space, which results in reducing the execution time and space. In this paper, we improve their work by deriving a new upper bound on the utility of e-sequences, namely the \textit{Projected utilization} ($\mathcal{P}_{k}$), and show both theoretically and empirically that if $\mathcal{P}_{k}$ is employed, the execution of the algorithm will improve compared with when the algorithm utilizes $\mathrm{LWU}_k$.    

The remainder of this paper is as follows. \Cref{Section2} reviews the background related to the sequences of interval-based events and the preliminaries relevant to the high utility pattern mining. \Cref{Sec.Down} reviews and also introduce properties that are used in reducing the search space. \Cref{Section3} introduces the projected upper bound and its consequent property.  
\Cref{Section4} reports on the empirical results and evaluates the proposed solution.  
\Cref{Section5} concludes the paper. 


\renewcommand{\arraystretch}{1.15} 
\begin{table}[H] 
\centering
\caption{Example of an E-sequence dataset}
\label{DB}
\begin{tabular}{|c|p{1.3cm}|p{1.9cm}|p{1.59cm}|l|}
\hline
\textbf{ID} & \textbf{Event Label} &  \textbf{Beginning Time} & \textbf{Finishing Time} &   \multicolumn{1}{>{\centering\arraybackslash}m{6cm}|}{\textbf{Pictorial Example}}  \\ \hline
\multirow{4}{*}{1} & $A$ & 6& 12& \multirow{4}{*}{  \includegraphics[width=0.5\textwidth, height=0.08\textheight]{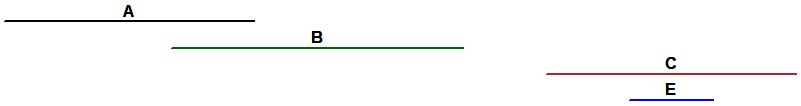}} \\ 
        & $B$ & 10& 17& \\ 
        & $C$ & 19& 25& \\ 
        & $E$ & 21& 23& \\ \hline
\multirow{5}{*}{2} & $A$ & 2& 7& \multirow{5}{*}{ \includegraphics[width=0.5\textwidth, height=0.1\textheight]{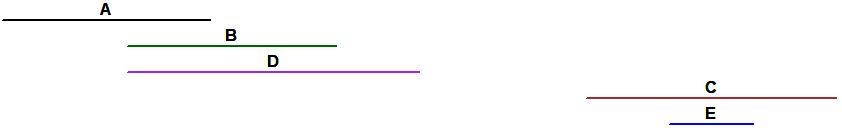}  } \\ 
        & $B$ & 5& 10& \\
        & $D$ & 5& 12& \\ 
        & $C$ & 16& 22& \\ 
        & $E$ & 18& 20& \\ \hline
\multirow{4}{*}{3} & $B$ & 6& 12& \multirow{4}{*}{  \includegraphics[width=0.5\textwidth, height=0.08\textheight]{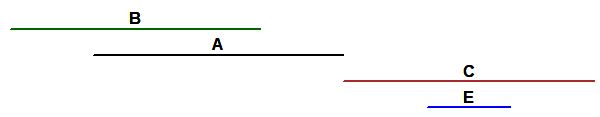}} \\ 
        & $A$ & 8& 14& \\ 
        & $C$ & 14& 20& \\ 
        & $E$ & 16& 18& \\ \hline
\multirow{4}{*}{4} & $B$ & 1& 5& \multirow{4}{*}{  \includegraphics[width=0.5\textwidth, height=0.08\textheight]{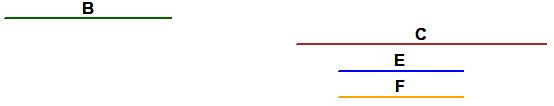}} \\ 
        & $C$ & 8& 14& \\
        & $E$ & 9& 12& \\ 
        & $F$ & 9& 12& \\ \hline

	\end{tabular}
\end{table}
\section{Background}
\label{Section2}
In this section, we review the preliminaries of the HUIPM problem \cite{Mirbagheri-DKE,Coincidence} that will be used to derive the projected upper bound.  

Let $\Sigma=\{A,B,...\}$ denote a finite alphabet. A triple $e=(l,b,f)$, where $l \in \Sigma$ is the event label, $b \in \mathbb{N} $ is the beginning time, and $f \in \mathbb{N}$ is the finishing time $(b<f)$, is said to be an \textit{event-interval}.
A list $s=\langle e_1,e_2, ...,e_n\rangle $ containing $n$ event intervals, which are ordered based on beginning time in ascending order while ties are broken based on the lexicographical order of the event labels, is called an event-interval sequence or \textit{E-sequence}. 
The number of event-intervals in $s$ determines the \textit{size} of E-sequence $s$ (denoted as $|s|=n$).
A set $D=~\{s_1, s_2, ..., s_d \}$ containing $d$ E-sequences, where each E-sequence $s_i$ is associated with an unique identifier $1 \leq i \leq d$, is called an \textit{E-sequence dataset}. For example, \Cref{DB} depicts an E-sequence dataset consisting of four E-sequences with identifiers 1 to 4.
\subsection{Coincidence}
\begin{defn} 
\textit{E-sequence unique time points} $T_s =\langle t_1, t_2, ..., t_m \rangle$ is a finite non-empty sequence consisting of the unique time points of $s$ sorted in ascending order such that $t_k < t_{k+1}$, $1 \leq k \leq m-1$, $t_k \in \{ b \ \lor f  \ | \ b, f \in s \}$.
\end{defn}
\begin{defn} 
\label{coincidenceDEF}
Let $s=\langle (l_1, b_1, f_1), ...,(l_j, b_j, f_j), ...,(l_n, b_n, f_n)\rangle$ be an E-sequence. A function $\Phi_s: \mathbb{N} \times \mathbb{N} \rightarrow 2^{\Sigma} $ is defined as:
\begin{equation}
\Phi_s(t_q,t_{q'})=  \{ l_j \ | \ (l_j, b_j, f_j)  \in s \ \wedge \ ( b_j \leq t_q) \wedge (t_{q'} \leq f_j)   \}
\end{equation}
where $1\leq j \leq n$ and $t_q< t_{q'}$.
 Given an E-sequence $s$ with corresponding E-sequence unique time points $T_s =\langle t_1, t_2, ..., t_m \rangle$, a \textit{coincidence} $c_k$ is defined as $\Phi_s(t_k,t_{k+1})$ where $t_k, t_{k+1} \in T_s$, $1 \leq k \leq m-1 $, are two consecutive time points. The \textit{duration} $\lambda_k$ of coincidence $c_k$ is $t_{k+1}-t_k$. 
The \textit{size of a coincidence} is the number of event labels in the coincidence. 
\end{defn}
For example, the E-sequence unique time points of $s_4$ in \Cref{DB} is $T_{s_4}= \{1,5,8,9,12,14\}$. Coincidence $c_4=\Phi_{s_4}(9,12)= \{C,E,F\}$, $\lambda_4=12-9=3$, and $|c_4|=3$.
\begin{defn} 
 A coincidence label sequence, or \textit{L-sequence}, $L= \langle c_1c_2...c_{g} \rangle$ is an ordered list of $g$ coincidences.
The length of an L-sequence is denoted by $K$, iff there are exactly $K$ coincidences in the L-sequence.
The size of an L-sequence, denoted $Z$, is determined by the maximum size of the coincidences in the L-sequence.
\end{defn}
For example, since $L= \langle \{C\} \{A,B\} \{D\} \rangle $ has 3 coincidences and the maximum size of the coincidences is $max\{1,2,1\}=2$, the length and size of $L$ is 3 and 2, respectively.

\begin{defn} 
Given a coincidence $c_k$ in E-sequence $s$, a coincidence eventset, or \textit{C-eventset}, is denoted $\sigma_k$ and defined as an ordered pair consisting of the coincidence $c_k$ and the corresponding coincidence duration $\lambda_k$, i.e.:
 \begin{equation}
\sigma_k=(c_k, \lambda_k)   
\end{equation} 
For brevity, the braces are omitted if $c_k$ in C-eventset $\sigma_k$ has only one event label, which is referred to as a \textit{C-event}.
 A coincidence eventset sequence, or \textit{C-sequence}, is an ordered list of C-eventsets, which is defined as $C = \langle \sigma_1 \sigma_2 ... \sigma_{h} \rangle$, where $h=|T_s|-1$. 
A \textit{C-sequence dataset} $\delta$ consists of a set of C-sequences, where each $C$ is associated with a unique identifier. 
\end{defn} 
For example, a C-sequence dataset, which includes C-sequences corresponding to the E-sequences shown in \Cref{DB}, is presented in \Cref{CoincidenceDB}. We denote the C-sequence with identifier 1 as $C_{\mathrm{s}_1}$; other C-sequences are numbered accordingly.
One can notice that in addition to describing E-sequences in a formulated language by transforming them to C-sequences, this representation also captures the durations of the event intervals.
\begin{table}[H]
\centering
\caption{ C-sequence dataset corresponding to the E-sequences in \Cref{DB}}
\label{CoincidenceDB}
\begin{tabular}{|c|c|}
\hline
ID & C-sequence   \\ \hline
 1 &  $\langle(A,4)(\{A,B\},2)(B,5)(\emptyset,2)(C,2)(\{C,E\},2) (C,2) \rangle$        \\ \hline
 2 &  $\langle (A,3)(\{A,B,D\},2)(\{B,D\},3)(D,2) (\emptyset,4)(C,2)(\{C,E\},2) (C,2) \rangle$         \\ \hline
 3 &  $\langle(B,2) (\{A,B\},4)(A,2)(C,2)(\{C,E\},2)(C,2) \rangle$         \\ \hline
 4 &$\langle(B,4)(\emptyset,3)(C,1)(\{C,E,F\},3)(C,2) \rangle$\\ \hline
\end{tabular}
\end{table}
\begin{defn}
Given two C-eventsets $\sigma_a=( c_a , \lambda_a)$
and  $\sigma_b=(c_b , \lambda_b)$, $\sigma_b$ \textit{contains} $\sigma_a$, which is denoted $\sigma_a \subseteq \sigma_b$, iff $ c_a \subseteq c_b \wedge \ \lambda_a=\lambda_b$.
Given two C-sequences $C=\langle \sigma_1 \sigma_2 ... \sigma_{h} \rangle$ and $C'=\langle \sigma_1' \sigma_2' ... \sigma_{h'}' \rangle$, we say $C$ is a \textit{C-subsequence} of $C'$, denoted $C \subseteq C'$, iff there exist integers $1 \leq j_1 \leq j_2 \leq ... \leq j_h \leq h' $ such that $\sigma{_k} \subseteq \sigma_{j_k}'$ for $1 \leq k \leq h$.
Given a C-sequence $C=\langle \sigma_1 \sigma_2 ... \sigma_{h} \rangle= \langle (c_1, \lambda_1)(c_2,\lambda_2)...(c_h, \lambda_h) \rangle$ and an L-sequence $L= \langle c_1'c_2'...c_{g}' \rangle$, $C$ \textit{matches} $L$, denoted as $C \sim L$, iff $h = g$ and $c_k=c_k'$ for $1 \leq k \leq h$.
\end{defn}
\sloppy For example, $\langle (A,4)\rangle$, $\langle (\{A,B\},2)(B,5) \rangle$, and $\langle (\{A,B\},2) \rangle$, are C-subsequences of C-sequence $C_{\mathrm{s}_1}$, while $ \langle (\{A,B,D\},2) \rangle$ and $ \langle (\{A,B\},2)(B,2) \rangle$ are not.
It is possible that multiple C-subsequences of a C-sequence match a given L-sequence. For example, if we want to find all C-subsequences of $C_{\mathrm{s}_1}$ in \Cref{CoincidenceDB} that match the L-sequence $\langle B \rangle$, we obtain $\langle (B,2) \rangle$ in the second C-eventset and $\langle (B,5) \rangle$ in the third C-eventset.

\subsection{Utility}
\label{UtilSec}
Let each event label $l \in \Sigma$, be associated with a value, called the \textit{external utility}, which is denoted as $p(l)$, such that $p: \Sigma \rightarrow \mathbb{R}_{\geq 0}  $. The external utility of an event label may correspond to any value of interest, such as the unit profit that is associated with the event label. In the following examples, we use the values presented in \Cref{External} as the external utilities associated with the C-sequence dataset shown in \Cref{CoincidenceDB}.
\begin{table} 
\centering
\caption{ External utilities associated with the event labels }
\label{External}
\begin{tabular}{|c|c|c|c|c|c|c|c|}
\hline
 Event label & A &B&C&D&E&F& $\emptyset$         \\ \hline
 External utility & 2&1&1&3&2&5 &0       \\ \hline
\end{tabular}
\end{table}

Let the utility of a C-event $(l,\lambda)$ be $\mathrm{u}(l,\lambda)= p(l) \times \lambda$.
The utility of a C-eventset $\sigma=( c , \lambda)= (\{l_1,l_2, ..., l_{|c|}\},\lambda)$ is defined as:
$
\mathrm{u_e}(\sigma)= \sum_{i=1}^{|c|} \mathrm{u}(l_i,\lambda)
$. The utility of a C-sequence $C=\langle \sigma_1 \sigma_2 ... \sigma_{h} \rangle$ is defined as:
$\mathrm{u_s}(C)= \sum_{i=1}^{h} \mathrm{u_e}(\sigma_i)$.
Therefore, the utility of the C-sequence dataset $\delta =\{ C_{s_1}, C_{s_2}, ..., C_{s_r} \} $ is defined as:
$
\mathrm{u_d}(\delta)= \sum_{i=1}^{r} \mathrm{u_s}(C_{s_{i}})
$.
For example, the utility of C-sequence $C_{s_1}= \langle(A,4)(\{A,B\},2)(B,5)(\emptyset,2)(C,2)(\{C,E\},2) (C,2) \rangle$ is $\mathrm{u_s}(C_{s_1})= 4 \times 2+ 2 \times (2+1) + 5 \times 1+ 2 \times 0 +2 \times 1+ 2 \times (1+ 2)+2 \times 1 = 29 $, 
and the utility of the C-sequence dataset $\delta$ in \Cref{CoincidenceDB} is $\mathrm{u_d}(\delta)=\mathrm{u_s}(C_{s_1})+\mathrm{u_s}(C_{s_2})+\mathrm{u_s}(C_{s_3})+\mathrm{u_s}(C_{s_4})=29+46+28+31=134$.

\begin{defn}
\label{MaxUtilK}
The \textit{maximum utility of $k$ C-eventsets in a C-sequence} is defined as:
$
\mathrm{u_{max_k}}(C,k)=  max\{ \mathrm{u_s}(C ') \ | \ C' \subseteq C \ \wedge \ |C'| \leq k \ \}
$. 
\end{defn}
For example, the maximum utility of 2 C-eventsets in $C_{s_1}$ is $\mathrm{u_{max_k}}(C_{s_1},2)=max\{ \mathrm{u_s}(\langle (A,4)(\{A,B\},2) \rangle), \mathrm{u_s}(\langle (A,4)(\{C,E\},2) \rangle) \}=14$.
\begin{defn}
Given a C-sequence dataset $\delta$ and an L-sequence $L= \langle c_1c_2...c_g \rangle$, the utility of $L$ in C-sequence $C=\langle \sigma_1 \sigma_2 ... \sigma_{h} \rangle \in \delta$ is defined as a \textit{utility set}:
\begin{equation}
\mathrm{u_l}(L,C)= \bigcup_{C' \sim L \wedge C' \subseteq C} \mathrm{u_s}(C')
\end{equation}
Consequently, the utility of $L$ in $\delta$ is defined as:
\begin{equation}
\mathrm{u_l}(L)= \bigcup_{C \in \delta} \mathrm{u_l}(L,C)
\end{equation}
\end{defn}
For example, consider L-sequence $L=\langle \{A\} \{B\} \rangle$. The utility of $L$ in $C_{s_1}$ shown in \Cref{CoincidenceDB} is $\mathrm{u_l}(L,C_{s_1})=\{\mathrm{u_s}(\langle (A,4)(B,2) \rangle),\mathrm{u_s}(\langle (A,4)(B,5)\rangle), \mathrm{u_s}(\langle (A,2)(B,5) \rangle) \}=\{10,13,9\}$. Also, the utility of $L$ in $\delta$ is $\mathrm{u_l}(L)=\{\mathrm{u_l}(L,C_{s_1}), \mathrm{u_l}(L,C_{s_2})\}=\{\{10,13,9\},\{8,9,7\}\}$. 
As seen from the above example and also in contrast to a sequence in frequent sequential pattern mining, multiple utility values can be associated with an L-sequence. The possibility of having multiple utility values will lead us to the concept of high utility, which is explored briefly in the next section.    

\subsection{High Utility Interval-based Pattern Mining}
\begin{defn}
The \textit{maximum utility} of an L-sequence $L$ in C-sequence dataset $\delta$ is defined as $\mathrm{u_{max}}(L)$:
\begin{equation}
\label{MaxUtilEq}
\mathrm{u_{max}}(L)= \sum_{C \in \delta} \mathrm{max} (\mathrm{u_{l}}(L,C))
\end{equation}
\end{defn} 
For example, the maximum utility of an L-sequence $L=\langle \{A\}\{B\} \rangle$ in C-sequence dataset $\delta$ shown in \Cref{CoincidenceDB} is $\mathrm{u_{max}}(L)=13+9+0+0=22 $.

\begin{defn}
An L-sequence $L$ is a \textit{high utility interval-based pattern} iff its maximum utility is no less than a user-specified minimum utility threshold $\xi$. Formally:
$
\mathrm{u_{max}}(L) \geq \xi \iff L \text{ is a high utility interval-based pattern.}
$
\end{defn}
\paragraph{Problem \RNum{1}:}
Given a user-specified minimum utility threshold $\xi$, an E-sequence dataset $D$, and  external utilities for event labels, the problem of high utility interval-based mining is to discover all L-sequences such that their utilities are at least $\xi$. 
When the maximum length and size of the L-sequence are specified, one can make Problem \textbf{\RNum{1}} more specialized to give Problem \textbf{\RNum{2}}, which is to discover all L-sequences with lengths and sizes of at most $K$ and $Z$, respectively, such that their utilities are at least $\xi$.

\section{Downward Closure Property}
\label{Sec.Down} 
A tight upper bound on the utility of the candidates reduces the search space leading to a more efficient way of pattern discovery. Here, we review an upper bound on the utility of L-sequences, namely \textit{$\mathrm{LWU}_k$}, which leads to the L-sequence-weighted Downward Closure (LDC) property (\Cref{I}). This property can be utilized by an algorithm, i.e., it was previously employed by the HUIPMiner algorithm \cite{Mirbagheri-DKE,Coincidence}, to prune redundant candidates. We also introduce interesting properties, which are used later to construct and verify the projected upper bound. 

\begin{defn}(\textit{$\mathrm{LWU}_k$}) 
\label{Upper}
The L-sequence-weighted utilization of an L-sequence w.r.t. a maximum length $k$ is defined as:
\begin{equation}
\label{LWU_Eq}
\mathrm{LWU}_k(L) = \sum_{C' \sim L \wedge C' \subseteq C \wedge C  \in \delta } \mathrm{u_{max_k}}(C,k) 
\end{equation} 
\end{defn}
For example, the L-sequence-weighted utilization of $L=\langle \{A\}\{B\} \rangle$ w.r.t. the maximum length $k=3$ in the C-sequence dataset shown in
\Cref{CoincidenceDB} is $\mathrm{LWU}_3(\langle \{A\}\{B\} \rangle)=20+30+0+0=50$.
\begin{lemma}
\label{lem1}
Given a C-sequence $C$, where $|C| \leq k' \leq k$, then
\begin{equation}
\mathrm{u_{max_k}}(C,k') \leq \mathrm{u_{max_k}}(C,k)
\end{equation} 
\end{lemma}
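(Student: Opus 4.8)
The plan is to argue directly from \Cref{MaxUtilK}, exploiting the monotonicity of a maximum taken over nested feasible sets. First I would write out both quantities explicitly: $\mathrm{u_{max_k}}(C,k')$ is the maximum of $\mathrm{u_s}(C')$ as $C'$ ranges over the feasible set $S_{k'} = \{\, C' : C' \subseteq C \wedge |C'| \leq k' \,\}$, while $\mathrm{u_{max_k}}(C,k)$ is the same maximum taken over $S_{k} = \{\, C' : C' \subseteq C \wedge |C'| \leq k \,\}$. This reduces the statement to a comparison of two maxima of a fixed objective over two explicitly described index sets.

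The central observation is that $S_{k'} \subseteq S_{k}$. Indeed, any $C'$ with $|C'| \leq k'$ automatically satisfies $|C'| \leq k$ because $k' \leq k$ by hypothesis, and the subsequence constraint $C' \subseteq C$ is identical in both sets. Since $\mathrm{u_s}$ is a fixed real-valued function on the C-subsequences of $C$ and $S_{k'}$ is a (nonempty) subset of $S_{k}$, the maximum of $\mathrm{u_s}$ over $S_{k'}$ cannot exceed its maximum over $S_{k}$, which is precisely the claimed inequality $\mathrm{u_{max_k}}(C,k') \leq \mathrm{u_{max_k}}(C,k)$.

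The only point requiring a word of care is nonemptiness of the feasible sets, so that the maxima are well defined; this is guaranteed because $C$ itself lies in both $S_{k'}$ and $S_{k}$ under the stated hypothesis $|C| \leq k' \leq k$. In fact this same hypothesis makes the size constraint vacuous on both sides, since every C-subsequence of $C$ has size at most $|C| \leq k' \leq k$, so that $S_{k'} = S_{k} = \{\, C' : C' \subseteq C \,\}$ and the two quantities are actually \emph{equal}; the stated $\leq$ is thus the (weaker) form recorded for later use in building the projected upper bound. I expect no genuine obstacle here: the result is a routine consequence of max-over-subset monotonicity, and the only thing to get right is faithfully matching the feasible sets to \Cref{MaxUtilK} before invoking the inclusion.
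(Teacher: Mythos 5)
Your argument is correct and is essentially the paper's own proof, which simply states that the lemma follows directly from \Cref{MaxUtilK}; you have just spelled out the underlying max-over-nested-sets monotonicity that the paper leaves implicit. Your extra observation that the hypothesis $|C|\leq k'$ makes the size constraint vacuous (so the two quantities in fact coincide) is a valid refinement, but nothing beyond the subset inclusion $S_{k'}\subseteq S_k$ is needed.
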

\begin{proof}
It follows directly from \Cref{MaxUtilK}.  
\end{proof}

\begin{thm}
\label{theorem1}
Given a C-sequence dataset $\delta$ and two L-sequences $L$ and $L'$, where $L \subseteq L'$ and $|L'| \leq k' \leq k$, the following properties hold:
\begin{enumerate}
\item 
\begin{centredequ}
\label{I'}
\mathrm{u_{max}}(L) \leq  \mathrm{LWU}_{|L|}(L)
\end{centredequ}
\item  
	\begin{centredequ}
	\label{I}
	\mathrm{LWU}_k(L') \leq \mathrm{LWU}_k(L)
	\end{centredequ} 
\item 
 \begin{centredequ}
 \label{II}
	\mathrm{LWU}_{k'}(L) \leq \mathrm{LWU}_k(L) 
	\end{centredequ}	
\item 
 \begin{centredequ}
 \label{III}
	\mathrm{LWU}_{k'}(L') \leq \mathrm{LWU}_k(L) 
	\end{centredequ}	
\end{enumerate}

\end{thm}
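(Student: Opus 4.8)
The plan is to establish the four inequalities in order, arranging matters so that only the first two require real work and the last two follow by chaining. Throughout I would read $\mathrm{LWU}_k(L)$ as the sum of the nonnegative weights $\mathrm{u_{max_k}}(C,k)$ taken over exactly those C-sequences $C\in\delta$ that contain an occurrence of $L$, i.e.\ over the support set $S_L=\{C\in\delta \mid \exists\,C'\sim L,\ C'\subseteq C\}$, each such $C$ counted once. This is the convention consistent with the worked example, where $\mathrm{LWU}_3(\langle\{A\}\{B\}\rangle)=20+30+0+0$ contributes one term per C-sequence; fixing $S_L$ is the organizing device for the whole argument, and I would pin this convention down first.

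For \Cref{I'} I would bound the dataset sum defining $\mathrm{u_{max}}(L)$ term by term. For each $C\in\delta$, any $C'$ with $C'\sim L$ and $C'\subseteq C$ has exactly $|L|$ C-eventsets, since matching forces $h=g$, and is a C-subsequence of $C$; hence $\mathrm{u_s}(C')\le\mathrm{u_{max_k}}(C,|L|)$ directly by \Cref{MaxUtilK}. Taking the maximum over all such $C'$ gives $\max(\mathrm{u_l}(L,C))\le\mathrm{u_{max_k}}(C,|L|)$ for $C\in S_L$, whereas for $C\notin S_L$ the utility set is empty and the term is $0$. Summing over $\delta$ yields $\mathrm{u_{max}}(L)\le\sum_{C\in S_L}\mathrm{u_{max_k}}(C,|L|)=\mathrm{LWU}_{|L|}(L)$.

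For \Cref{I} the crux is the support containment $S_{L'}\subseteq S_L$ whenever $L\subseteq L'$. I would prove that any C-sequence containing $L'$ also contains $L$: if $C'\sim L'$ with $C'\subseteq C$, then using the index map witnessing $L\subseteq L'$ together with the subset relations on the matched coincidences, one extracts $C''\subseteq C'$ with $C''\sim L$, and transitivity of $\subseteq$ gives $C''\subseteq C$, so $C\in S_L$. Since every weight $\mathrm{u_{max_k}}(C,k)$ is nonnegative, summing over the smaller set $S_{L'}$ cannot exceed the sum over $S_L$, which is \Cref{I}. I expect this to be the main obstacle, because it forces the (only implicitly given) notion of L-sequence containment to interact correctly with both matching and C-subsequence containment; making the extraction of $C''$ precise is where the care lies.

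The remaining parts are then immediate. For \Cref{II} the support set $S_L$ is identical on both sides, so I would apply \Cref{lem1} termwise, $\mathrm{u_{max_k}}(C,k')\le\mathrm{u_{max_k}}(C,k)$ for each $C\in S_L$, and sum. For \Cref{III} I would chain the previous two results, for instance $\mathrm{LWU}_{k'}(L')\le\mathrm{LWU}_{k}(L')$ by \Cref{II} applied to $L'$, followed by $\mathrm{LWU}_{k}(L')\le\mathrm{LWU}_{k}(L)$ by \Cref{I}; the hypotheses $|L'|\le k'\le k$ guarantee both steps apply.
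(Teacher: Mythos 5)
Your proposal is correct and follows essentially the same route as the paper: part (i) by unwinding the definitions of $\mathrm{u_{max}}$ and $\mathrm{LWU}$ term by term, part (iii) by applying \Cref{lem1} termwise, and part (iv) by chaining the earlier inequalities. The only difference is one of detail rather than of method --- the paper dispatches (i) with a bare reference to the two defining equations and delegates (ii) entirely to the cited prior work, whereas you spell out the term-by-term bound and supply the support-containment-plus-nonnegativity argument that the LDC property rests on.
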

\begin{proof}
\begin{enumerate}
\item It is inferred from \Cref{MaxUtilEq} and \Cref{LWU_Eq}.
\item The proof of the LDC property in \Cref{I} can be found in \cite{Mirbagheri-DKE}.
\item It trivially follows from \Cref{lem1}.
\item It follows immediately from \Cref{I} and \Cref{II}.
\end{enumerate}
\end{proof}
In order to discover high utility patterns, HUIPMiner generates coincidence candidates by concatenating event labels. As the number of candidates can grow exponentially, the algorithm takes advantage of the LDC property in the pruning strategy, to discard unpromising candidates.   

\begin{defn}
 A coincidence candidate $c$ is \textit{promising} iff $\mathrm{LWU}_k(c) \geq \xi$. Otherwise it is \textit{unpromising}.   
\end{defn}
\begin{corollary} 
\label{corol_1}
Let $\mathrm{a}$ be an unpromising coincidence candidate and $\mathrm{a'}$ be a coincidence. Any superset produced by concatenating $\mathrm{a}$ and $\mathrm{a'}$ is of low utility.   
\end{corollary}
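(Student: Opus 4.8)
The plan is to recognize that ``$L'$ is of low utility'' means exactly $\mathrm{u_{max}}(L') < \xi$, and then to chain the utilization bounds collected in \Cref{theorem1} so that the maximum utility of any concatenated superset is forced below $\xi$. First I would fix notation: write $L'$ for an arbitrary L-sequence obtained by concatenating the unpromising candidate $\mathrm{a}$ with the coincidence $\mathrm{a'}$. Viewing $\mathrm{a}$ itself as a (short) L-sequence, the concatenation embeds $\mathrm{a}$ into $L'$, so $\mathrm{a} \subseteq L'$. Since HUIPMiner enumerates only candidates whose length does not exceed the maximum length $k$ (Problem \RNum{2}), I may assume $|L'| \leq k$.

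Next I would assemble the chain. Applying \Cref{I'} to $L'$ gives $\mathrm{u_{max}}(L') \leq \mathrm{LWU}_{|L'|}(L')$. Then, taking $k' = |L'|$ so that the side conditions $\mathrm{a} \subseteq L'$ and $|L'| \leq k' \leq k$ are met, property \Cref{III} yields $\mathrm{LWU}_{|L'|}(L') \leq \mathrm{LWU}_k(\mathrm{a})$. Finally, the hypothesis that $\mathrm{a}$ is unpromising means, by definition, $\mathrm{LWU}_k(\mathrm{a}) < \xi$. Concatenating the three steps gives
\begin{equation*}
\mathrm{u_{max}}(L') \leq \mathrm{LWU}_{|L'|}(L') \leq \mathrm{LWU}_k(\mathrm{a}) < \xi,
\end{equation*}
so $L'$ is not a high utility interval-based pattern, which is the assertion. (Equivalently, one could replace the single appeal to \Cref{III} by \Cref{I} followed by \Cref{II}, since \Cref{III} is precisely their composition.)

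The step most in need of care is the length bookkeeping attached to \Cref{III}: the choice $k' = |L'|$ satisfies $k' \leq k$ only because the mining procedure caps candidate length at $k$, so I would state this assumption explicitly and tie it to Problem \RNum{2}. A secondary point worth flagging is that the argument never uses any property of $\mathrm{a'}$ beyond the fact that concatenation produces an $L'$ with $\mathrm{a} \subseteq L'$; this is exactly what justifies the word ``Any'' in the statement, since the bound applies uniformly to every superset so produced. No new estimate is required beyond \Cref{theorem1}: the entire proof is selecting the right parts of that theorem and verifying their hypotheses.
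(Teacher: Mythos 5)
Your proposal is correct and takes essentially the same route as the paper, which simply asserts that the corollary ``follows directly from the LDC property''; you have filled in the length bookkeeping and the bridge from $\mathrm{LWU}$ back to $\mathrm{u_{max}}$ via \Cref{I'} and \Cref{III} that the paper leaves implicit. No substantive difference in approach.
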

\begin{proof} 
It follows directly from the LDC property.
\end{proof} 

\section{The Projected Utilization}
\label{Section3}
In this section, we introduce a new upper bound called \textit{projected utilization} of an L-sequence, $\mathcal{P}_k$, and we show that $\mathcal{P}_k$ is a tighter upper bound compared to $\mathrm{LWU_k}$.
\begin{defn}
\label{ProjectedDef}
($\mathcal{P}_{k}$) The projected utilization of $L$ w.r.t. a maximum length $k$ is defined as sum of the maximum utility of $L$ with the L-sequence-weighted utilization of $L$ w.r.t the \textit{remaining length} of $k$:
\begin{equation}
\label{DyanmicEq}
\mathcal{P}_{k}(L)=\mathrm{u_{max}}(L)+\mathrm{LWU}_{k-|L|}(L)  
\end{equation}
where $|L| \leq k$ denote the length of L-sequence $L$.
\end{defn}
For example, the projected utilization of $L=\langle \{A\}\{B\} \rangle$ w.r.t. the maximum length $k=3$ in the C-sequence dataset shown in \Cref{CoincidenceDB} is $\mathcal{P}_3(\langle \{A\}\{B\} \rangle)=\mathrm{u_{max}}(\langle \{A\}\{B\} \rangle)+\mathrm{LWU}_{1}(\langle \{A\}\{B\} \rangle)=(13 + 9 + 0 + 0)+(8+12+0+0)=42$.

In contrast to $\mathrm{LWU}_{k}$, which remains constant during the process of discovery for an L-sequence, $\mathcal{P}_{k}$ is dynamically decreasing with respect to the maximum length of the expected patterns. As the length of the pattern gets closer to the maximum length, the maximum utility for the pattern will be projected (decreased), which causes a reduction in the search space for finding the remaining part of the pattern. This will lead us to the following theorem.  
\begin{lemma}
\label{Projected-thm1}
$\mathcal{P}_{k}(L)$ is upper bounded by $\mathrm{LWU}_k (L)$. More formally,
\begin{equation}
\label{Propos}
\mathcal{P}_{k}(L) \leq \mathrm{LWU}_k (L)
\end{equation}
\end{lemma}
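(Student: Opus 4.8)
The plan is to reduce the global inequality \Cref{Propos} to a statement about a single C-sequence and then to reason about subsequences inside that C-sequence. First I would expand both sides by their definitions: by \Cref{DyanmicEq} we have $\mathcal{P}_{k}(L)=\mathrm{u_{max}}(L)+\mathrm{LWU}_{k-|L|}(L)$, and each of $\mathrm{u_{max}}(L)$ (via \Cref{MaxUtilEq}) and the two $\mathrm{LWU}$ terms (via \Cref{LWU_Eq}) is a sum ranging over the C-sequences $C\in\delta$ in which $L$ occurs, i.e. those admitting some $C'\sim L$ with $C'\subseteq C$. A C-sequence in which $L$ does not occur contributes $0$ to every term, since its utility set $\mathrm{u_l}(L,C)$ is empty and the matching condition under \Cref{LWU_Eq} fails. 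Hence it suffices to fix one matching $C$ and establish the per-sequence inequality $\max(\mathrm{u_l}(L,C))+\mathrm{u_{max_k}}(C,k-|L|)\le \mathrm{u_{max_k}}(C,k)$; summing it over all matching $C$ then gives the claim.

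For a fixed matching $C$, write $m=|L|$ and let $C^{*}\subseteq C$ with $C^{*}\sim L$ attain $\mathrm{u_s}(C^{*})=\max(\mathrm{u_l}(L,C))$, noting that matching forces $|C^{*}|=m$. Let $C^{\dagger}\subseteq C$ with $|C^{\dagger}|\le k-m$ attain $\mathrm{u_{max_k}}(C,k-m)$ in the sense of \Cref{MaxUtilK}. The idea is to merge $C^{*}$ and $C^{\dagger}$ into one C-subsequence $\widehat{C}\subseteq C$ by taking the C-eventsets of $C$ used by either of them in their original order, so that $|\widehat{C}|\le m+(k-m)=k$. Because every per-event utility $p(l)\cdot\lambda$ is non-negative (\Cref{UtilSec}), retaining these eventsets only increases the total, giving $\mathrm{u_s}(\widehat{C})\ge \mathrm{u_s}(C^{*})+\mathrm{u_s}(C^{\dagger})$; since $\widehat{C}$ is an admissible competitor in \Cref{MaxUtilK} for length $k$, we conclude $\mathrm{u_{max_k}}(C,k)\ge \mathrm{u_s}(\widehat{C})\ge \max(\mathrm{u_l}(L,C))+\mathrm{u_{max_k}}(C,k-m)$. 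Monotonicity of $\mathrm{u_{max_k}}$ in its length argument (\Cref{lem1}) and property \Cref{I'} are available as supporting facts should a looser intermediate bound be convenient.

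The step I expect to be the main obstacle is precisely this merging: $C^{*}$ and $C^{\dagger}$ need not use disjoint positions of $C$, so a shared eventset would be counted once in $\mathrm{u_s}(\widehat{C})$ but twice in $\mathrm{u_s}(C^{*})+\mathrm{u_s}(C^{\dagger})$, and then the clean length/utility addition breaks. My intended fix is to choose $C^{\dagger}$ not as the unconstrained length-$(k-m)$ optimum but as the best length-$(k-m)$ subsequence drawn from the positions of $C$ left unused by $C^{*}$, which restores disjointness and makes $|\widehat{C}|\le k$ together with the additive utility bound immediate. The delicate point that then remains — and the crux of the whole argument — is to relate this complement-restricted budget back to the quantity $\mathrm{u_{max_k}}(C,k-m)$ that actually appears in $\mathrm{LWU}_{k-|L|}(L)$; this is where the non-negativity of utilities and the freedom in selecting the remaining coincidences must be exploited carefully, possibly by a global accounting across $\delta$ rather than a strictly sequence-by-sequence comparison.
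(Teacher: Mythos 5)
Your route is genuinely different from the paper's: you reduce to the per-sequence inequality $\max(\mathrm{u_l}(L,C))+\mathrm{u_{max_k}}(C,k-|L|)\le \mathrm{u_{max_k}}(C,k)$ and try to prove it by merging subsequences, whereas the paper argues purely algebraically, rewriting \Cref{Propos} as $\mathrm{u_{max}}(L)\le \mathrm{LWU}_k(L)-\mathrm{LWU}_{k-|L|}(L)$, asserting that this difference \emph{equals} $\mathrm{LWU}_{|L|}(L)$, and invoking \Cref{I'}. However, the obstacle you flag in your last paragraph is not merely delicate --- it is fatal, and your per-sequence inequality is false. Take $\delta=\{C_{s_1}\}$ from \Cref{CoincidenceDB}, $L=\langle\{A\}\{B\}\rangle$, $k=3$. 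The C-eventset utilities of $C_{s_1}$ are $8,6,5,0,2,6,2$, so $\mathrm{u_{max_k}}(C_{s_1},1)=8$ and $\mathrm{u_{max_k}}(C_{s_1},3)=20$, while $\max(\mathrm{u_l}(L,C_{s_1}))=13$, attained by $C^{*}=\langle(A,4)(B,5)\rangle$. Your inequality would require $13+8\le 20$. Equivalently, $\mathcal{P}_3(L)=21>20=\mathrm{LWU}_3(L)$ on this one-sequence dataset, so no global accounting over $\delta$ can rescue the argument either: the statement of the lemma itself fails under the definitions as written. The reason your merge cannot be repaired is exactly the double-counting you identified: $C^{*}$ already consumes $(A,4)$, the unique eventset realizing $\mathrm{u_{max_k}}(C_{s_1},1)$, and the best eventset in the complement is worth only $6$; restricting $C^{\dagger}$ to the complement does prove a correct bound, but with the complement-restricted optimum in place of $\mathrm{u_{max_k}}(C,k-|L|)$, which is a strictly weaker statement than \Cref{Propos}.

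This is not a defect of your approach in particular. The paper's own proof rests on the identity $\mathrm{LWU}_k(L)-\mathrm{LWU}_{k-|L|}(L)=\mathrm{LWU}_{|L|}(L)$, i.e.\ on additivity of the top-$k$ utility sum in its length argument, and the same numbers refute it: $\mathrm{u_{max_k}}(C_{s_1},3)-\mathrm{u_{max_k}}(C_{s_1},1)=12<14=\mathrm{u_{max_k}}(C_{s_1},2)$. In general one only has subadditivity, $\mathrm{u_{max_k}}(C,k)\le \mathrm{u_{max_k}}(C,k-m)+\mathrm{u_{max_k}}(C,m)$, which sends the paper's chain of implications in the wrong direction (that proof is also written backwards, deriving a true consequence from the desired conclusion). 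If the lemma is to be saved, the definition of $\mathcal{P}_k$ must change rather than the proof --- e.g.\ by computing the remaining-length utilization on the portion of each $C$ left over after (or disjoint from) an optimal occurrence of $L$, which is precisely the complement-restricted quantity that your merging argument does succeed in bounding.
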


\begin{proof} 
We rewrite \Cref{Propos} in accordance with \Cref{ProjectedDef}:
\begin{align*}
& \mathcal{P}_{k}(L)=\mathrm{u_{max}}(L)+\mathrm{LWU}_{k-|L|}(L)  \leq \mathrm{LWU}_k (L)& \nonumber \\
&\Rightarrow \mathrm{u_{max}}(L) \leq \mathrm{LWU}_k (L) - \mathrm{LWU}_{k-|L|}(L)= \mathrm{LWU}_{|L|}(L)& \nonumber \\
\label{Inference}
&\Rightarrow \mathrm{u_{max}}(L) \leq  \mathrm{LWU}_{|L|}(L).&
\end{align*}
\end{proof}

\begin{thm}[utility-Projected Downward Closure property]
\label{Projected-thm}
Given a C-sequence dataset $\delta$ and two L-sequences $L$ and $L'$, where $L \subseteq L'$ and $|L'| \leq k$, then
\begin{equation}
\mathcal{P}_{k}(L') \leq \mathcal{P}_{k}(L)
\end{equation} 
\end{thm}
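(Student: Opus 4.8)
The plan is to argue directly from \Cref{ProjectedDef}. Writing $m=|L|$ and $m'=|L'|$, and noting that $L\subseteq L'$ forces $m\le m'\le k$, the statement unfolds to
\[
\mathrm{u_{max}}(L')+\mathrm{LWU}_{k-m'}(L')\;\le\;\mathrm{u_{max}}(L)+\mathrm{LWU}_{k-m}(L).
\]
The difficulty is that $\mathrm{u_{max}}$ is not anti-monotone, so $\mathrm{u_{max}}(L')$ can exceed $\mathrm{u_{max}}(L)$; the whole point is that this surplus is absorbed by the difference of the two $\mathrm{LWU}$ terms, whose length budgets differ by $m'-m$. So the first move is to quantify the surplus.

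The backbone I would prove first is the auxiliary bound
\[
\mathrm{u_{max}}(L')\;\le\;\mathrm{u_{max}}(L)+\mathrm{LWU}_{m'-m}(L').
\]
Fix a C-sequence $C\in\delta$ containing a match of $L'$ and let $C'$ be the maximum-utility such match, so $|C'|=m'$ and $\mathrm{u_s}(C')=\max(\mathrm{u_l}(L',C))$. Since $L\subseteq L'$, the match $C'$ contains a sub-match $C''$ of $L$ with $|C''|=m$, and the remaining $m'-m$ C-eventsets $C'\setminus C''$ are disjoint from $C''$. Splitting $\mathrm{u_s}(C')=\mathrm{u_s}(C'')+\mathrm{u_s}(C'\setminus C'')$, I would bound $\mathrm{u_s}(C'')\le\max(\mathrm{u_l}(L,C))$ (it is one admissible match of $L$ in $C$) and $\mathrm{u_s}(C'\setminus C'')\le\mathrm{u_{max_k}}(C,m'-m)$ via \Cref{MaxUtilK} (it is the utility of $m'-m$ C-eventsets of $C$). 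Summing over the C-sequences matching $L'$ — a subset of those matching $L$, so the $\max(\mathrm{u_l}(L,\cdot))$ terms are dominated by $\mathrm{u_{max}}(L)$ of \Cref{MaxUtilEq}, while the $\mathrm{u_{max_k}}(\cdot,m'-m)$ terms assemble into $\mathrm{LWU}_{m'-m}(L')$ of \Cref{LWU_Eq} — yields the bound.

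Substituting this into the goal reduces it to a pure $\mathrm{LWU}$ statement,
\[
\mathrm{LWU}_{m'-m}(L')+\mathrm{LWU}_{k-m'}(L')\;\le\;\mathrm{LWU}_{k-m}(L),
\]
in which the two left-hand length parameters add to exactly $k-m$. I would close this by first merging the two terms on $L'$ into $\mathrm{LWU}_{k-m}(L')$ — the same additive relation in the length parameter that underlies the proof of \Cref{Projected-thm1} — and then invoking the LDC property \Cref{I} to pass from $L'$ to $L$, giving $\mathrm{LWU}_{k-m}(L')\le\mathrm{LWU}_{k-m}(L)$ because $L\subseteq L'$.

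I expect this last merging step to be the main obstacle and the place needing the most care. Unlike the monotonicity facts \Cref{I}, \Cref{II} and \Cref{III}, combining $\mathrm{LWU}_{a}(L')+\mathrm{LWU}_{b}(L')$ into $\mathrm{LWU}_{a+b}(L')$ is a claim about how $\mathrm{u_{max_k}}(C,\cdot)$ behaves when the length budget of a single C-sequence is split, and it is precisely the identity implicitly used for \Cref{Projected-thm1}; I would therefore invoke it in exactly the same form as there. One must also check the range condition $k-m\ge m'$ required to apply \Cref{I} to the merged term, falling back on \Cref{III} applied to the individual terms when it fails. The $\mathrm{u_{max}}$ splitting of the second paragraph, by contrast, is routine once the disjointness of $C''$ and $C'\setminus C''$ is observed.
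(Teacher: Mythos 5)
Your instinct that a forward argument is required here---and that the crux is absorbing the possible surplus $\mathrm{u_{max}}(L')-\mathrm{u_{max}}(L)$ into the difference of the two $\mathrm{LWU}$ terms---is sound, and in that respect your plan is more careful than the paper's own proof, which simply unfolds \Cref{ProjectedDef}, replaces each $\mathrm{u_{max}}$ term by its upper bound from \Cref{I'} (on both sides of the inequality, including the right-hand side, where substituting an upper bound weakens rather than strengthens the claim), silently merges $\mathrm{LWU}_{|L'|}(L')+\mathrm{LWU}_{k-|L'|}(L')$ into $\mathrm{LWU}_{k}(L')$, and arrives at the LDC property; that is, it derives a known truth from the claim rather than the claim from known truths. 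However, the step you yourself single out as the main obstacle is exactly where your argument fails. The merge $\mathrm{LWU}_{a}(L')+\mathrm{LWU}_{b}(L')\le \mathrm{LWU}_{a+b}(L')$ would require $\mathrm{u_{max_k}}(C,a)+\mathrm{u_{max_k}}(C,b)\le \mathrm{u_{max_k}}(C,a+b)$ for each C-sequence, and \Cref{MaxUtilK} yields the reverse: splitting an optimal subsequence of at most $a{+}b$ C-eventsets into pieces of sizes at most $a$ and at most $b$ gives $\mathrm{u_{max_k}}(C,a+b)\le \mathrm{u_{max_k}}(C,a)+\mathrm{u_{max_k}}(C,b)$, with strict inequality whenever the two optima overlap (take $a=b=1$ on a C-sequence whose largest eventset utility exceeds its second largest). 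There is also no prior result to ``invoke in the same form'': the proof of \Cref{Projected-thm1} rests on the identical unjustified identity $\mathrm{LWU}_{k}(L)-\mathrm{LWU}_{k-|L|}(L)=\mathrm{LWU}_{|L|}(L)$, so citing it imports the gap rather than closing it; and the fallback via \Cref{III} only bounds each term separately by $\mathrm{LWU}_{k}$, which cannot control their sum by a single $\mathrm{LWU}_{k-m}$.

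Your auxiliary bound has an independent defect: when $L'$ extends $L$ by enlarging a coincidence rather than appending one, $m'=m$ and the bound collapses to $\mathrm{u_{max}}(L')\le \mathrm{u_{max}}(L)$, which fails because the extra labels contribute positive utility; correspondingly, the split $\mathrm{u_s}(C')=\mathrm{u_s}(C'')+\mathrm{u_s}(C'\setminus C'')$ discards the utility of labels of matched eventsets lying outside the coincidences of $L$. No repair along these lines can succeed, because under the literal definitions the statement itself fails: for the single C-sequence $\langle(A,1)(B,10)(X,1)\rangle$ with unit external utilities, $L=\langle\{A\}\rangle$, $L'=\langle\{A\}\{B\}\rangle$ and $k=3$, one computes $\mathcal{P}_{3}(L)=1+(10+1)=12$ but $\mathcal{P}_{3}(L')=(1+10)+10=21$, the dominant eventset being counted once in $\mathrm{u_{max}}(L')$ and again in $\mathrm{LWU}_{1}(L')$. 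The theorem (and \Cref{Projected-thm1}) can only hold if the $\mathrm{LWU}_{k-|L|}$ term in \Cref{DyanmicEq} is evaluated on the projection of each C-sequence, i.e.\ on the eventsets disjoint from the chosen match of $L$, which is what the name suggests but not what \Cref{DyanmicEq} combined with \Cref{LWU_Eq} says. Any correct proof must first make that definition precise; once it is, your decomposition of a match of $L'$ into a match of $L$ plus leftover eventsets is the right skeleton.
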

\begin{proof}
\begin{align*}
& \xRightarrow{\text{\Cref{ProjectedDef}}} \mathrm{u_{max}}(L')+\mathrm{LWU}_{k-|L'|}(L') \leq \mathrm{u_{max}}(L)+\mathrm{LWU}_{k-|L|}(L) & \\
& \xRightarrow{\text{\Cref{I'}}}
\mathrm{LWU}_{|L'|}(L')+ \mathrm{LWU}_{k-|L'|}(L') \leq  \mathrm{LWU}_{|L|}(L) + \mathrm{LWU}_{k-|L|}(L) & \\
&\Rightarrow \mathrm{LWU}_{k}(L') \leq   \mathrm{LWU}_{k}(L).
\end{align*}
\end{proof}

We now redefine the promising and unpromising candidates based on the PDC property (\Cref{Projected-thm}). 
\begin{defn}
\label{newPromising}
 A coincidence candidate $c$ is \textit{promising} iff $\mathcal{P}_{k}(c) \geq \xi$. Otherwise it is \textit{unpromising}.   
\end{defn}
It can be verified that \Cref{newPromising} will not affect \Cref{corol_1} as it now holds by the PDC property. In fact, 
using $\mathcal{P}_{k}$ will lead to fewer or at most the same number of candidates than applying $\mathrm{LWU}_k$.   
The PDC property of $\mathcal{P}_{k}$ will especially be beneficial when finding longer patterns, e.g., patterns of lengths $k \geq 2$,
since as the length of candidates increases, the upper bound $\mathcal{P}_{k}$ keeps reducing.
That makes the search space keeps shrinking which results in a more efficient approach.  

\section{Experiments}
\label{Section4}
We evaluate the effectiveness of the new upper bound, $\mathcal{P}_{k}$, when it is employed by the HUIPMiner algorithm \cite{Mirbagheri-DKE} to mine high utility patterns in interval-based event sequences on a real-world dataset. HUIPMiner was implemented in C++11 and tested 
on a laptop computer with a 2.6GHz Intel 10th generation Core i7 processor and 16GB of memory. 

\subsection{Dataset}
\label{DatasetSec}
We used a publicly available dataset, namely \textit{Blocks} \cite{morchenSensor}, in our experiments. 
 Each event interval in this dataset corresponds to a visual primitive obtained from videos of a human hand stacking colored blocks and describing which blocks are touched as well as the actions of the hand (e.g., contacts blue, attached hand red, etc.). Each e-sequence represents one of eight scenarios, such as assembling a tower.
Since there are no external utilities associated with the dataset, we assume every event label in the dataset has an external utility of 1. \Cref{StatDB} gives a summary of the dataset along with some statistics, including minimum (min), maximum (max), mean (avg), and standard deviation (stdv).

\begin{table*}[ht] 
\centering
\caption{Statistical information about the Blocks dataset }
\label{StatDB}
\begin{tabular}{c|c|c|c|c|c|c|c|c|c }
\hline
  \multicolumn{1}{p{2.75cm}|}{\centering \# \hspace{0pt} Event Intervals  } &  \multicolumn{1}{p{2.10 cm}|}{\centering \hspace{0pt} \#  E-sequences} & \multicolumn{3}{c|}{E-sequence Size} & \multicolumn{1}{p{1.25 cm}|}{\centering \#  Labels} & \multicolumn{4}{c}{Interval  Duration}   \\ 
   		  & &	 min	& max & avg&   &min	& max & avg & stdv   \\ \hline
    1207 & 210 &3 &12 &6 &8 &1&57&17  &12   \\ \hline
 \end{tabular}
\end{table*}

\subsection{Evaluation}
The experiments are conducted to show the effectiveness of the PDC property when it is utilized by a pruning strategy in the HUIPMiner algorithm.  
We evaluate the performance of HUIPMiner when the PDC or LDC properties are used on the Blocks dataset in terms of the execution time and peak memory consumption, while varying the minimum utility threshold $\xi$ and the maximum length of patterns $K$.   
These evaluations are shown on a log-10 scale in \Cref{Pruning_xi} and \Cref{Pruning_k}, respectively. The execution time of HUIPMiner in seconds is shown on the left and the peak memory usage in Kilobytes is presented on the right of the two figures.
The maximum size of patterns $Z$ is set to 5 in the experiments.

\Cref{Pruning_xi} shows the evaluation of the HUIPMiner on the datasets while varying $\xi$ and keeping $K$ set to 4. 
\begin{figure}[!htbp]
\pgfplotsset{width=4.5cm}
\begin{subfigure}{.51 \textwidth}\centering
\begin{tikzpicture}
\begin{axis}[scale only axis,  xtick={0,0.05,...,0.25}, xlabel=$\xi$, ylabel= Time (s), legend pos=north west,  
xticklabel style={ /pgf/number format/fixed, /pgf/number format/precision=5},ymode=log,log basis y={10}]
	\addplot[color=blue,mark=x	] coordinates {
    (0.01,136.637) (0.05,41 ) (0.1,23.3) (0.15,14.933 ) (0.2,7.094) (0.25,5.052)
    };
       \addplot[color=red,mark=o] coordinates {
 	(0.01,122.469) (0.05,36.52 )(0.1,19.419) (0.15, 11.004) (0.2,6.021) (0.25,2.503)						   };
   \end{axis}
\end{tikzpicture}
\label{Blocks_Time_xi}
\end{subfigure}
\begin{subfigure}{.51 \textwidth}\centering
\begin{tikzpicture}
\begin{axis}[scale only axis,legend pos=north west, xtick={0,0.05,...,0.25}, xlabel=$\xi$, xticklabel style={ /pgf/number format/fixed, /pgf/number format/precision=5}, ylabel= Peak Memory Usage (KB)]
	\addplot[color=blue,mark=x	] coordinates {
	(0.01,2536) (0.05, 2112) (0.1,2012) (0.15,1884 ) (0.2,1816) (0.25,1732)     
     };
       \addplot[color=red,mark=o] coordinates {
 	(0.01, 2420) (0.05,1984 ) (0.1,1904) (0.15,1840 ) (0.2,1728) (0.25,1676 )
     };
   \end{axis}
\end{tikzpicture}
\label{Blocks_Memory_xi}
\end{subfigure}


\hspace{10mm}
\begin{tikzpicture} 
      \begin{customlegend}[legend columns=2,legend style={draw=none,column sep=2ex},legend entries={ LDC, PDC}]
            \addlegendimage{color=blue,mark=x}
        \addlegendimage{color=red,mark=o} 
        \end{customlegend}
 \end{tikzpicture}
\caption{Performance Comparison of the HUIPMiner algorithm under various $\xi$}
\label{Pruning_xi}
\end{figure}
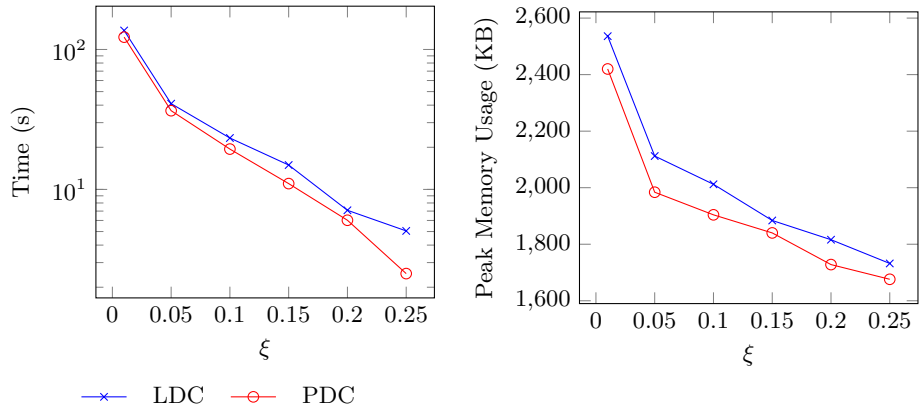
As shown, the execution time of the algorithm is improved by an average of 21\% when the PDC property is used compared to when the LDC is applied. The memory usage is also reduced by an average of 5\% when the PDC property is utilized.  
  
\Cref{Pruning_k} shows the evaluation of the HUIPMiner algorithm on the dataset when $K$ is varied between 1 and 6 and $\xi$ is set to 0.25. 
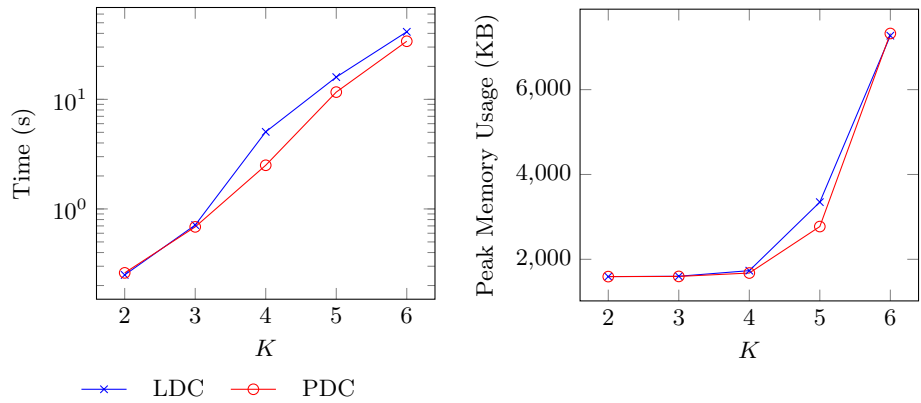
\begin{figure}[!htbp]
\pgfplotsset{width=4.5cm}
\begin{subfigure}{.51 \textwidth}\centering
\begin{tikzpicture}
\begin{axis}[scale only axis,  xtick={2,3,...,6}, xlabel=$K$, ylabel= Time (s), legend pos=north west,  
xticklabel style={ /pgf/number format/fixed, /pgf/number format/precision=5},ymode=log,log basis y={10}]
	\addplot[color=blue,mark=x	] coordinates {
		(2, 0.25) (3,0.713)	(4,5.052)(5, 15.965)(6, 41.335)

    };
       \addplot[color=red,mark=o] coordinates {
	 (2, 0.261) (3,0.686)	(4,2.503)(5,11.645)(6, 33.898)
   				
};
    \end{axis}
\end{tikzpicture}
\label{Blocks_Time_k}
\end{subfigure}
\begin{subfigure}{.51 \textwidth}\centering
\begin{tikzpicture}
\begin{axis}[scale only axis,legend pos=north west,  xtick={2,3,...,6}, xlabel=$K$, xticklabel style={ /pgf/number format/fixed, /pgf/number format/precision=5}, ylabel= Peak Memory Usage (KB)]
	\addplot[color=blue,mark=x	] coordinates {
	(2, 1592) (3,1600)	(4,1732)(5, 3348)(6, 7272)

     };
       \addplot[color=red,mark=o] coordinates {
(2, 1592) (3,1596)	(4,1676)(5,2772)(6, 7324)	
						
 	     };
   \end{axis}
\end{tikzpicture}
\label{Blocks_Memory_k}
\end{subfigure}


\hspace{10mm}
\begin{tikzpicture} 
      \begin{customlegend}[legend columns=2,legend style={draw=none,column sep=2ex},legend entries={ LDC, PDC}]
            \addlegendimage{color=blue,mark=x}
        \addlegendimage{color=red,mark=o} 
        \end{customlegend}
 \end{tikzpicture}
\caption{Performance Comparison of the HUIPMiner algorithm under various $K$}
\label{Pruning_k}
\end{figure}
The results of these experiments indicate that using the PDC property will improve the execution time and memory usage of the algorithm by an average of 19\% and 4\%, respectively. Interestingly, when $K=4$, the algorithm can perform two times faster by utilizing the projected upper bound.   

The number of extracted high utility patterns is also tested to ensure that the projected upper bound does not compromise the completeness of the algorithm. \Cref{figNPat} confirms the integrity of using both upper bounds. As expected, we obtained exactly the same number of patterns from the dataset when any of the two upper bounds are applied. 
\begin{figure}[!htbp]
\pgfplotsset{width=4.9cm}
\begin{subfigure}[t]{0.51\textwidth}
\centering
\begin{tikzpicture}
\begin{axis}[scale only axis, xtick={0.01,0.05,0.1,0.15,0.20,0.25},   xlabel=$\xi$, ylabel= Number of Patterns, legend pos=north west,  
tick label style={ /pgf/number format/fixed, /pgf/number format/precision=5}, ymode=log,log basis y={10}, ybar=0.5*\pgflinewidth, bar width=6pt]
	\addplot coordinates {
	(0.01,15548) (0.05,3020 )  (0.1,1252)  (0.15,508)  (0.2,114) (0.25,12)    
    };
   \end{axis}
\end{tikzpicture}
\label{Blocks_Patterns_xi}
\subcaption{$K=4$, $\xi$ is varied}
\end{subfigure}
\begin{subfigure}[t]{0.51\textwidth}
\centering
\begin{tikzpicture}
\begin{axis}[scale only axis, xtick={2,3,...,6},   xlabel=$K$, ylabel= Number of Patterns, legend pos=north west,  
tick label style={ /pgf/number format/fixed, /pgf/number format/precision=5}, ybar=0.5*\pgflinewidth, bar width=6pt]
	\addplot coordinates {
 	(2,6) (3,9 )  (4,12)  (5,23)  (6,61)     
    };
   \end{axis}
\end{tikzpicture}
\label{Blocks_Patterns_k}
\subcaption{$\xi=0.25$, $K$ is varied}
\end{subfigure}
\caption{Number of patterns discovered by the HUIPMiner algorithm when applying the pruning strategy based on either LDC or PDC property}
\label{figNPat}
\end{figure}
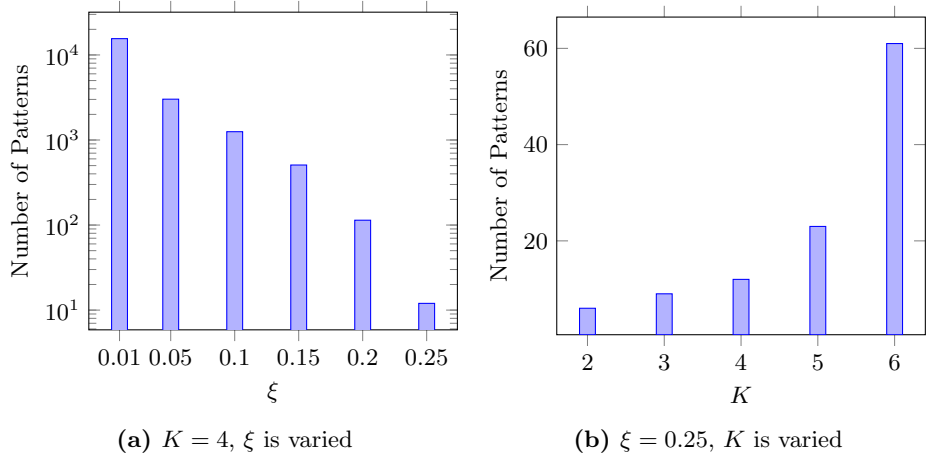

\section{Conclusions}
\label{Section5}
We showed that the projected upper bound can improve the efficiency of the HUIPMiner algorithm. 
By applying the projected upper bound, HUIPMiner can be executed up to two times faster than when LWU is applied. In addition, memory consumption is reduced when the projected upper bound is used.

 \bibliographystyle{splncs} 
\let\chapter\section\bibliography{bibfile} 

\end{document}